\documentclass[conference, twcolumn]{IEEEtran}
\usepackage{graphicx,cite,calc,dsfont}
\usepackage{hyperref}
\usepackage{amsmath}
\usepackage{amsthm,amssymb}
\usepackage{algorithm}
\usepackage{algpseudocode}
\newtheorem{thm}{Theorem}
\newtheorem{lem}{Lemma}

\theoremstyle{definition}
\newtheorem{examp}{Example}

\usepackage{empheq}
\usepackage{breqn}
\usepackage{comment}

















\newcommand{\bit}{\begin{itemize}}
\newcommand{\eit}{\end{itemize}}

\newcommand{\bc}{\begin{center}}
\newcommand{\ec}{\end{center}}

\newcommand{\ba}{\begin{array}}
\newcommand{\ea}{\end{array}}

\newcommand{\beq}{\begin{equation}}
\newcommand{\eeq}{\end{equation}}

\newcommand{\beqn}{\begin{equation*}}
\newcommand{\eeqn}{\end{equation*}}

\newcommand{\bean}{\begin{eqnarray*}}
\newcommand{\eean}{\end{eqnarray*}}
\newcommand{\bea}{\begin{eqnarray}}
\newcommand{\eea}{\end{eqnarray}}








\begin{document}
\title{Cloud-Aided Interference Management with Cache-Enabled Edge Nodes and Users}
\author{ Seyed Pooya Shariatpanahi$^{1,2}$, Jingjing Zhang$^{3}$, Osvaldo Simeone$^{3}$,\\ Babak Hossein Khalaj$^{4}$, Mohammad-Ali Maddah-Ali$^{5}$\\[4mm] 

1. School of Electrical and Computer Engineering, University of Tehran, Tehran, Iran \\
2. School of Computer Science, Institute for Research in Fundamental Sciences (IPM), Tehran, Iran  \\
3. Department of Informatics, King's College London, London, UK \\
4. Department of Electrical Engineering, Sharif University of Technology, Tehran, Iran\\
5. Nokia Bell Labs, Holmdel, NJ, USA \\
\vspace{-0.9cm}
\thanks{ACK}}

\maketitle

\begin{abstract}
This paper considers a cloud-RAN architecture with cache-enabled multi-antenna Edge Nodes (ENs) that deliver content to cache-enabled end-users. The ENs are connected to a central server via limited-capacity fronthaul links, and, based on the information received from the central server and the cached contents, they transmit on the shared wireless medium to satisfy users' requests. By leveraging cooperative transmission as enabled by ENs' caches and fronthaul links, as well as multicasting opportunities provided by users' caches, a close-to-optimal caching and delivery scheme is proposed. As a result, the minimum Normalized Delivery Time (NDT), a high-SNR measure of delivery latency, is characterized to within a multiplicative constant gap of $3/2$ under the assumption of uncoded caching and fronthaul transmission, and of one-shot linear precoding. This result demonstrates the interplay among fronthaul links capacity, ENs' caches, and end-users' caches in minimizing the content delivery time.
\end{abstract}

\section{Introduction}

Caching content at the network edge can mitigate the heavy traffic burden at network peak times. Contents are proactively stored in caches at the Edge Nodes (ENs) or at the end-users during low-traffic periods, relieving network congestion at peak hours \cite{Kangasharju-2002,Nygren-2010}. Edge caching at the ENs can enable cooperative wireles transmission in the presence of shared cached contents across multiple ENs \cite{Liu-2015, Sengupta-2017, Naderializadeh-2017}. In contrast, caching of shared content at the users  enables the multicasting of coded information that is useful simultaneously for multiple users \cite{MaddahAli-2014,Karamchandani-2016,Pedarsani-2016,Zhang-2017-0}.


In practice, not all contents can be cached, and requested uncached contents should be fetched from a central server through finite-capacity fronthaul links. This more general set-up, illustrated in Fig.~1, was studied in \cite{Sengupta-2017, Tandon-2016,Azimi-2017,Zhang-2017} (see also reference therein) in the absence of users' caches. These references consider as the performance metric of interest the overall delivery latency, including both fronthaul and wireless contributions. In particular, in prior works \cite{Naderializadeh-2017}\cite{Madda-15}, the delivery latency is measured in the high Signal-to-Noise Ratio (SNR) regime. While \cite{Sengupta-2017, Tandon-2016,Azimi-2017} allow any form of delivery strategy, including interference alignment, in \cite{Zhang-2017}, the optimal high-SNR latency performance is studied under the assumption that wireless transmission can only use practical one-shot linear precoding strategies. Reference \cite{Zhang-2017} presents a caching-fronthaul-wireless transmission scheme that is shown to be latency-optimal within a multiplicative factor of $3/2$.


\begin{figure}[t!] 
  \centering
\includegraphics[width=0.65\columnwidth]{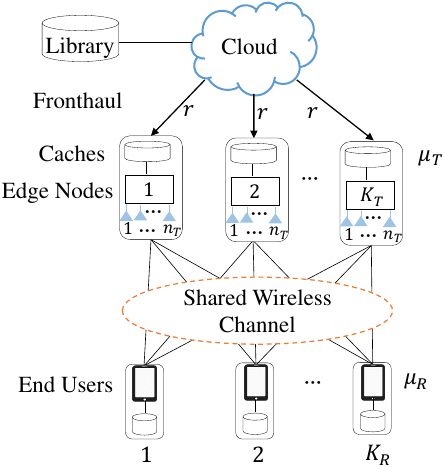}
\caption{Cloud-RAN system with cache-enabled ENs and end-users.} 
\label{fig:model}
\vspace{-1.7em}
\end{figure}

In this paper, we extend the results in \cite{Zhang-2017} by allowing caching not only at the ENs but also at the end-users. To this end, we consider the cloud-RAN scenario in Fig.~1 and evaluate the impact of both cooperative transmission opportunities at the ENs and multicasting opportunities brought by caching at the users. Caching at the users has, in fact, two potentially beneficial effects on the network performance. First, since users have already cached some parts of the library, they need not receive from the network the cached portion of the requested file --- this is known as the \emph{local caching gain}. Second, as assumed, by a careful cache content placement, a common coded message can benefit more than one user, which is known as the \emph{global caching gain} \cite{MaddahAli-2014}. Assuming that entire library is cached across ENs and users and that the fronthaul links are absent, reference \cite{Naderializadeh-2017} proved that the gains accrued form cooperative transmission by the ENs and the global caching gain provided by users' caches are additive. Here, we generalize this conclusion by considering the role of finite-capacity fronthaul links and by allowing for partial caching of the library of popular files across ENs and users. 


The rest of the paper is organized as follows. In Section \ref{Sec-SystemModel} we describe the system model. Section \ref{Sec-MainResult} presents the main results along with an intuitive discussion. In Section \ref{Sec-Achievable} we detail the proposed caching and delivery scheme. Then, we derive a converse in Section \ref{Sec-Converse}, which is proved to be within a multiplicative gap of 3/2 as compared to the high-SNR performance achievable by the proposed scheme. Finally, Section \ref{Sec-Conclusion} concludes the paper.

\section{System Model}\label{Sec-SystemModel}
We consider a content delivery scenario, illustrated in Fig.~1, in which $K_T$ Edge Nodes (ENs), each with $n_T$ antennas, deliver requested contents to $K_R$ single-antenna users via a shared wireless medium. The contents library includes $N$ files, each of $L$ bits, which are collected in set $\mathcal{W}=\{W_1,\ldots,W_N\}$. Furthermore, each file $W_n$ is divided into $F$ packets, collected in the set $\mathcal{W}_n=\{W_{nf}\}_{f=1}^{F}$, where $F$ is an arbitrary integer and each packet consists of $L/F$ bits. Each EN is connected to a central server, where the library resides, via a wired fronhaul link of capacity $C_F$ bits per symbol of the wireless channel. Moreover, each EN is equipped with a cache of size $\mu_T N$ files, for $\mu_T\leq 1$. In this paper, in contrast to \cite{Zhang-2017}, we assume that the users are also cache-enabled, each with a cache of size $\mu_R N$ files, for $\mu_R \leq 1$. Henceforth, for simplicity, we assume that both $\mu_T K_T$ and $\mu_R K_R$ are integers, with extensions following directly as in \cite{Sengupta-2017,Zhang-2017, Naderializadeh-2017}

The system operation includes two phases, namely the \emph{cache content placement} and \emph{content delivery} phases. In the first phase, each EN and each user caches \emph{uncoded} fractions of the files in the library at network off-peak traffic hours and without knowing the actual requests of the users in the next phase. In the second phase, at network peak traffic hours, at any transmission slot, each active user requests access to one of the files in the library, i.e., user $k \in \{1,\ldots,K_R\}$ requests file $W_{d_k}$, $d_k \in \{1,\dots,N\}$. For delivery, first, the cloud sends on each fronthaul link some \emph{uncoded} fractions of the requested files to the ENs. For more general ways to use the fronthaul links, we refer to \cite{Sengupta-2017}. After fronthaul transmission, the ENs collaboratively deliver the requested contents to the users via the edge wireless downlink channel based on the cached contents and fronthaul signals. 

The signal received by each user $k$ on the downlink channel is given as
\begin{equation} \label{trans}
    y_k=\sum_{i=1}^{K_T} \mathbf{h}_{ki}^H \mathbf{x}_i + z_k,
\end{equation}
in which $\mathbf{h}_{ki} \in \mathbb{C}^{n_T \times 1}$ is the complex representation of the fading channel vector from EN $i$ to user $k$; $\mathbf{x}_i \in \mathbb{C}^{n_T \times 1}$ is the transmitted vector from EN $i$; $z_k$ is unit-power additive Gaussian noise; and $(.)^H$ represents the Hermitian transpose. The fading channels are drawn from a continuous distribution and are constant in each transmission slot. The transmission power of each EN is constrained by $\mathbb{E} \left[||\mathbf{x}_i||^2\right] \leq \mathrm{SNR}$. Furthermore, as in \cite{Zhang-2017,Naderializadeh-2017}, the ENs transmit using one-shot linear precoding, so that the vector transmitted by each EN at time slot $t$ is given as 
\begin{equation}
    \mathbf{x}_i = \sum_{(n,f)} \mathbf{v}_{inf} s_{nf},
\end{equation}
where $s_{nf}$ is a symbol encoding file fraction $W_{nf}$, and $\mathbf{v}_{inf}$ is the corresponding beamforming vector. Furthermore, we assume that Channel State Information (CSI) is available to all the entities in the network. 

The performance metric of interest is the Normalized Delivery Time (NDT) introduced in \cite{Sengupta-2017}, which measures the high-SNR latency due to fronthaul and wireless transmissions.
To this end, we write $C_F=r\log\text{SNR}$, hence allowing the fronthaul capacity to scale with the wireless edge $\mathrm{SNR}$ with a scaling constant $r\geq 0$. Then, denoting the time required to complete the fronthaul and wireless edge transmissions as $T_F$ and $T_E$ (measured in symbol periods of the wireless channel) respectively, the total NDT is defined as the following limit over SNR and file size 
\begin{align} \label{def:NDT}
\delta&=\delta_F+\delta_E \\ \nonumber
&=\lim_{\text{SNR} \rightarrow \infty} \lim_{L \rightarrow \infty} \left(\frac{E[T_F]}{L / \log(\mathrm{SNR})} +\frac{E[T_E]}{L / \log(\mathrm{SNR})} \right).
\end{align}
In \eqref{def:NDT}, the term $L / \log(\mathrm{SNR})$ represents the normalizing delivery time on an interference-free channel; the term $\delta_F=\lim_{\text{SNR} \rightarrow \infty} \lim_{L \rightarrow \infty} E[T_F]/(L / \log(\text{SNR}))$ is defined as the fronthaul NDT; and $\delta_E=\lim_{\text{SNR} \rightarrow \infty} \lim_{L \rightarrow \infty} E[T_E]/(L / \log(\text{SNR}))$ as the edge NDT. 

Accordingly, for given cloud and caching resources defined by the triple $(r,\mu_T,\mu_R)$, the minimal NDT over all achievable policies is defined as
\begin{equation} \label{def:min}
\delta^*(r,\mu_T,\mu_R)=\inf\{\delta(r,\mu_T,\mu_R): \delta(r,\mu_T,\mu_R) \, \textrm{is achievable}\},
\end{equation}
where the infimum is over all uncoded caching, uncoded fronthaul, and one-shot linear edge transmissions policies that ensure reliable delivery for any set of requested files \cite{Zhang-2017,Sengupta-2017}. 



\section{Main Result}\label{Sec-MainResult}

In this section we state our main result and its implications. We proceed by first proposing an achievable scheme and then proving its optimality within a constant multiplicative gap. 

In the cache content placement phase, the scheme follows the standard approach of sharing a distinct fraction of a file to all subsets of $\mu_T K_T$ ENs and $\mu_R K_R$ users, hence satisfying the cache capacity constraints \cite{Naderializadeh-2017}. As a result, each fraction of any requested file is available at $m_R=\mu_R K_R$ users, which we define as \emph{receive-side} multiplicity, and at $\mu_T K_T$ ENs. As we will see, in the content delivery phase, the \emph{transmit-side multiplicity} $m_T$, i.e., the number of ENs at which any fraction of a requested files is available, can be increased beyond $\mu_T K_T$ by means of fronthaul transmission. 

As proved in \cite{Naderializadeh-2017}, and briefly reviewed  below, the content multiplicities $m_T$ and $m_R$ can be leveraged in order to derive a delivery scheme that serves simultaneously \begin{align} \label{user}
   u(m_T,m_R) =\min(K_R, n_T m_T+m_R) 
\end{align}
 users at the maximum high-SNR rate of $\log$(SNR). Unlike \cite{Naderializadeh-2017}, however, here the transmit-side multiplicity $m_T$ is not fixed, since any uncached fraction of a file can be delivered to an EN by the cloud on the fronthaul. The multiplicity $m_T$ can be hence increased at the cost of a larger fronthaul delay $\delta_F$. Therefore, the multiplicity $m_T$ should be chosen carefully, by accounting for the fronthaul latency $\delta_F$ as well as for the wireless NDT $\delta_E$, which decreases with the size of the number $u(m_T,m_R)$ of users that can be served simultaneously. Our main result below obtains an approximately optimal solution in terms of minimum NDT. 

Before detailing the main result, we briefly present how the scheme in \cite{Naderializadeh-2017} serves $u(m_T,m_R)$ users simultaneously at rate $\log$(SNR) by leveraging both multicasting and cooperative Zero-Forcing (ZF) precoding. Assume that $n_T m_T+m_R\leq K_R$ (the complement case follows in a similar way). At any given time, $m_T+m_R$ ENs transmit simultaneously to deliver fractions of the requested files to $n_T m_T+m_R$ users. To this end, the active ENs are grouped into all subsets of $m_T$ active ENs. Note that there are $\binom{m_T+m_R}{m_T}$ such groups, and that each EN generally belongs to multiple groups. All groups transmit at the same time, with each group delivering collaboratively a shared fraction of a file to the requesting user. Transmission by a group is done within the null space of the channel of other $n_T m_T-1$ active users by means of Zero-Forcing (ZF) one-shot linear precoding. The interference created by this transmission to the remaining $m_R$ active users is removed by leveraging the information in the receive-side caches. This is possible since the caching strategy ensures that the message transmitted by a group of $m_T$ ENs is also available to $m_R$ users. Note that the scheme in \cite{Naderializadeh-2017} assumes $n_T=1,$ but the extension described above is straightforward.

Based on the above mentioned achievable scheme, along with an optimized transmit-side multiplicity, the following theorem characterizes the minimum NDT \eqref{def:min} to within a multiplicative constant equal to $3/2$.

\begin{thm}[Multiplicative gap on minimum NDT]\label{Th-MainTheorem}
 The NDT
\begin{align}\label{Eq_Optimum_NDT}
 \delta_{up}(r,\mu_T,\mu_R)&=\frac{K_R \left(m(r,\mu_T,\mu_R)- \mu_T K_T\right)^+}{K_T r}  \notag \\
 &+ \frac{K_R(1-\mu_R)}{\min \left\{K_R , n_T m(r,\mu_T,\mu_R) +K_R \mu_R\right\}}
\end{align}
is achievable, where 
\begin{equation}\label{Eq-Optimum-Multiplicity}
    m(r,\mu_T,\mu_R)=
    \begin{cases}
      m(r,\mu_R) & \text{if} ~\mu_T K_T < m(r,\mu_R), \\
      \mu_T K_T  & \text{if} ~ m(r,\mu_R) \leq \mu_T K_T \leq m_{max}, \\
      m_{max} & \text{if}~ \mu_T K_T > m_{max},
    \end{cases}
\end{equation}
and
\vspace{-0.5cm}
\begin{equation}\label{Eq_Optimum_Mr} 
~~~~~~m(r,\mu_R)\!=\!\!
    \begin{cases}
     \left[\sqrt{\frac{K_T(1-\mu_R)r}{n_T}}-\frac{K_R \mu_R}{n_T}\right] &\text{if} ~ r<r_{th}, \\
     m_{max} &\text{if} ~ r \geq r_{th},
    \end{cases}
\end{equation}
with \vspace{-0.5cm}
\begin{equation}
r_{th}=\frac{n_T}{K_T(1-\mu_R)} \left(m_{max}+\frac{K_R \mu_R}{n_T}\right)^2,
\end{equation}
and
\begin{equation} \label{mmax}
m_{max} = \min \left\{K_T, \left\lceil{ \frac{K_R(1-\mu_R)}{n_T}} \right\rceil  \right\}.
\end{equation}
Moreover, the minimum NDT satisfies the inequalities
\begin{equation} \label{lower}
   \frac{2}{3}  \delta_{up}(r,\mu_T,\mu_R) \leq \delta^*(r,\mu_T,\mu_R) \leq  \delta_{up}(r,\mu_T,\mu_R).
\end{equation}
\end{thm}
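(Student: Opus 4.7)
The plan is to prove the upper bound \eqref{Eq_Optimum_NDT} by explicit construction and then match it with a converse tight up to the factor $3/2$.

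\emph{Placement and fronthaul.} First I would adopt the symmetric scheme that splits each file $W_n$ into $\binom{K_T}{\mu_T K_T}\binom{K_R}{\mu_R K_R}$ equal subfiles $W_{n,\mathcal{T},\mathcal{R}}$ indexed by $\mathcal{T}\subseteq[K_T]$ of size $\mu_T K_T$ and $\mathcal{R}\subseteq[K_R]$ of size $\mu_R K_R$, and store $W_{n,\mathcal{T},\mathcal{R}}$ at every EN in $\mathcal{T}$ and every user in $\mathcal{R}$. A direct count shows both cache constraints hold with equality, the initial transmit multiplicity is $\mu_T K_T$, and the receive multiplicity is $m_R=\mu_R K_R$. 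I then parametrize the delivery by a target transmit multiplicity $m\in[\mu_T K_T,m_{max}]$: the cloud symmetrically duplicates each requested subfile onto $(m-\mu_T K_T)^+$ extra ENs, so that the per-EN fronthaul load is $K_R(1-\mu_R)(m-\mu_T K_T)^+/K_T$ file-fractions and hence $\delta_F(m)=K_R(m-\mu_T K_T)^+/(K_T r)$.

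\emph{Wireless phase and optimization over $m$.} After the fronthaul step every needed subfile sits at exactly $m$ ENs and $\mu_R K_R$ users, so the one-shot linear scheme reviewed in Section~III serves $u(m,m_R)=\min\{K_R,n_T m+K_R\mu_R\}$ users in parallel at rate $\log\mathrm{SNR}$, and since the wireless workload is $K_R(1-\mu_R)$ file-equivalents, $\delta_E(m)=K_R(1-\mu_R)/u(m,m_R)$. I would then minimize $\delta(m)=\delta_F(m)+\delta_E(m)$ over $m$: on the unsaturated branch $n_T m+K_R\mu_R<K_R$ the continuous relaxation is convex in $m$ and the first-order condition returns $\sqrt{K_T(1-\mu_R)r/n_T}-K_R\mu_R/n_T$. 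Projecting this stationary point onto the feasible window $[\mu_T K_T,m_{max}]$ (decreasing $m$ below $\mu_T K_T$ wastes cached copies without reducing $\delta_F$, and $m>m_{max}$ cannot grow $u$ further) gives exactly the three branches of \eqref{Eq-Optimum-Multiplicity}, and substituting back reproduces $\delta_{up}(r,\mu_T,\mu_R)$ in \eqref{Eq_Optimum_NDT}.

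\emph{Converse and the $3/2$ factor.} For the lower bound I would adapt the genie-aided cut-set arguments of \cite{Zhang-2017,Sengupta-2017} so as to incorporate the users' caches. Picking a subset $\mathcal{S}\subseteq[K_R]$ of $s$ users and revealing to a genie the files requested by the remaining users, two inequalities must hold: (i) the $s(1-\mu_R)L$ uncached bits needed by $\mathcal{S}$ must be supplied from the $\mu_T K_T NL$ bits cached at the ENs plus $rK_T\log\mathrm{SNR}\cdot T_F$ bits delivered over the fronthaul, lower bounding $\delta_F$; and (ii) any one-shot linear scheme from the $K_T$ ENs to these $s$ users with receive-side multiplicity $\mu_R K_R$ has sum-DoF at most $\min\{s,n_T K_T+\mu_R K_R\}$, lower bounding $\delta_E$. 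Optimizing over $s$ gives a bound $\delta_{lb}$ of the same two-term shape as $\delta_{up}$; the main obstacle, and the most delicate step, is engineering the cut and the genie so that this structural similarity is preserved across all three regimes of $m(r,\mu_T,\mu_R)$ in \eqref{Eq-Optimum-Multiplicity}. Once the structural match is achieved, proving $\delta_{up}/\delta_{lb}\leq 3/2$ reduces to a one-variable maximization of a rational function of $m$, which yields the left inequality in \eqref{lower}.
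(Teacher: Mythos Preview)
Your achievability argument is essentially the paper's: the same symmetric placement, the same fronthaul boost of the transmit multiplicity with cost $\delta_F(m)=K_R(m-\mu_T K_T)^+/(K_T r)$, the same edge NDT $\delta_E(m)=K_R(1-\mu_R)/u(m,m_R)$, and the same continuous optimization over $m$ followed by projection onto $[\mu_T K_T,m_{max}]$. That part is fine.

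The converse, however, is where your proposal departs from the paper and, as stated, would not yield the $3/2$ gap. Your inequality~(i) says the $s(1-\mu_R)L$ uncached bits for $\mathcal{S}$ must be covered by the aggregate EN cache $\mu_T K_T N L$ plus the fronthaul payload $K_T r\,\delta_F\cdot L$; but $\mu_T K_T N$ is typically far larger than $s(1-\mu_R)$, so this bound on $\delta_F$ is almost always vacuous. Your inequality~(ii) bounds the one-shot DoF by $\min\{s,\,n_T K_T+\mu_R K_R\}$, i.e.\ it grants the ENs full cooperation over all $K_T$ nodes regardless of how few ENs actually hold a given packet. Together, (i) and (ii) produce a lower bound that does not contain the transmit multiplicity at all: it is roughly $K_R(1-\mu_R)/\min\{K_R,n_TK_T+\mu_R K_R\}$, which corresponds to the achievable edge NDT at $m=K_T$ with the fronthaul cost stripped away. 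For small $r$ and small $\mu_T$ the ratio $\delta_{up}/\delta_{lb}$ then blows up, so no constant multiplicative gap follows. The difficulty you flag (``engineering the cut and the genie so that this structural similarity is preserved'') is precisely the missing idea, and optimizing over the number $s$ of users cannot substitute for tracking the multiplicity.

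What the paper does instead is a packet-level converse rather than a cut-set one. It indexes every packet of every file by the pair $(\tau_T,\tau_R)$ of EN and user subsets that store it after caching and fronthaul, and invokes a per-packet DoF bound (their Lemma~3, extending \cite{Naderializadeh-2017}): in any one-shot linear slot delivering packets $\{W_{n_lf_l}\}$, the number of served users is at most $\min_l\bigl(|\tau_{Tl}|n_T+|\tau_{Rl}|\bigr)$. Averaging the resulting edge-NDT bound over all demand vectors and applying Cauchy--Schwarz collapses everything to a single scalar $x=\frac{1}{NF}\sum_{i,j} i\,b_{ij}$, an \emph{average transmit multiplicity}, and yields the convex function
\[
f(x)=\frac{K_R}{K_T r}(x-\mu_T K_T)+\frac{K_R(1-\mu_R)}{n_T x+\mu_R K_R}.
\]
This is exactly the continuous relaxation of $\delta(m)$, so the $3/2$ gap then reduces to the one-variable comparison you anticipated, but the route to $f(x)$ is the decomposition-plus-averaging argument above, not a genie cut. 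Incidentally, the converse in \cite{Zhang-2017} that you cite as a template is already of this packet-decomposition type, not a genie-aided cut-set bound.
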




Theorem \ref{Th-MainTheorem} implies that the multiplicity $m_T=m(r,\mu_R,\mu_T)$ in \eqref{Eq-Optimum-Multiplicity} is optimal in terms of NDT, up to a constant multiplicative gap.  Importantly, in contrast to \cite{Zhang-2017}, the choice of $m_T$ in \eqref{Eq-Optimum-Multiplicity} depends also on the caching capacity $\mu_R$ at the users, and it reduces to selection in  \cite{Zhang-2017} when $\mu_R=0$. The first term in \eqref{Eq_Optimum_NDT} is the fronthaul NDT $\delta_F$ required to convey the uncached portions of files to achieve the desired multiplicity $m_T=m(r,\mu_T,\mu_R)$ in \eqref{Eq-Optimum-Multiplicity}. The second term is the edge transmission NDT $\delta_E$, which accounts for the local caching gain (i.e., $(1-\mu_R)$), and for the combined global caching gain due to the users' caches and for the cooperation gain due to the ENs' caches and to fronthaul transmission (i.e., $n_T m + K_R \mu_R$). The result  hence generalizes the main conclusion from  \cite{Shariatpanahi-2016} and \cite{Naderializadeh-2017} that the gains from coded caching multicasting opportunities at the receive side and cooperation at the transmit side are additive.  

\begin{figure}[t!] 
  \centering
\includegraphics[width=0.8\columnwidth]{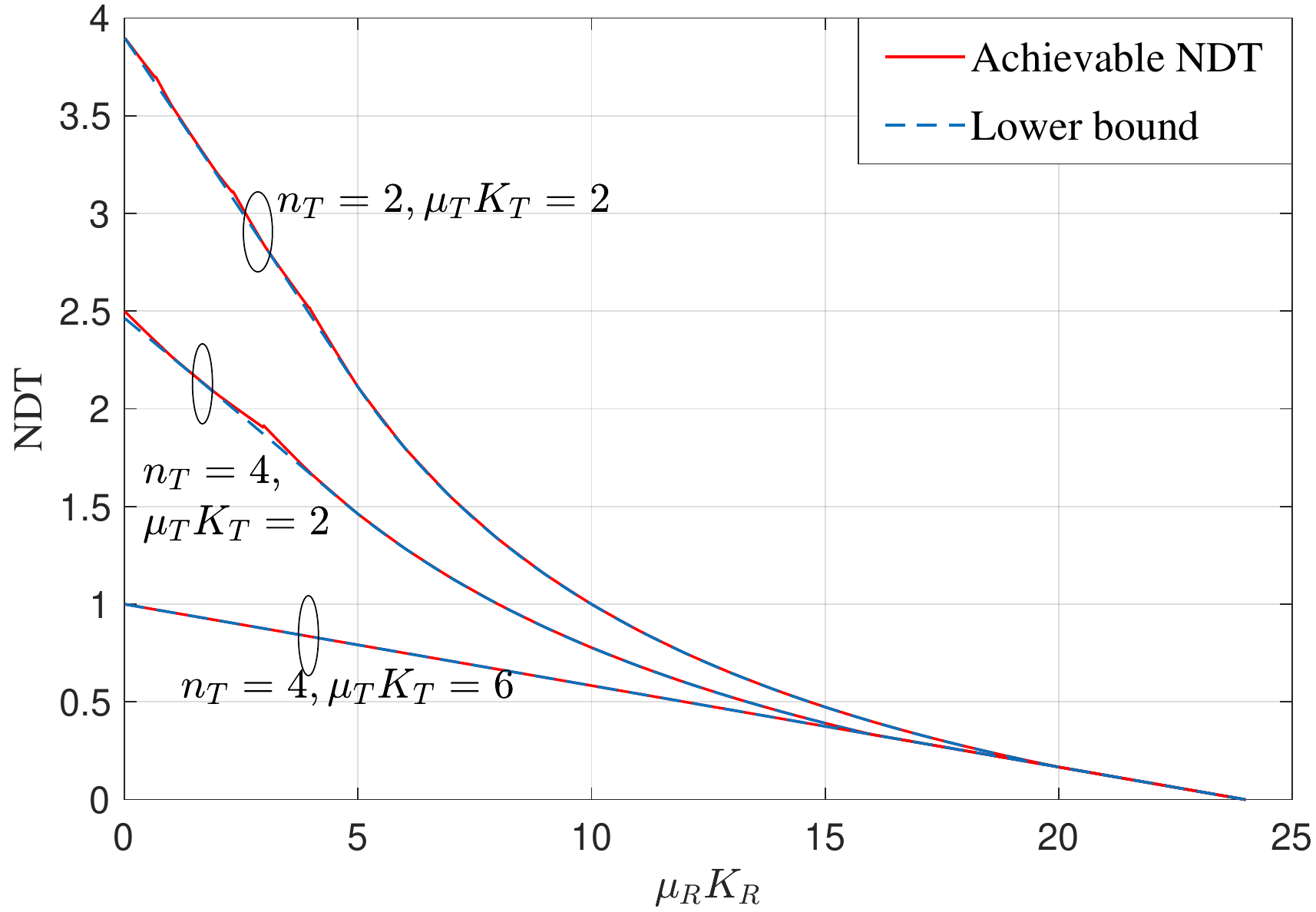}
\vspace{-0.3cm}
\caption{Achievable NDT $\delta_{up}(r,\mu_R,\mu_T)$ and lower bound $\delta^*(r,\mu_R,\mu_T)$ versus $\mu_T K_T$ for different values of $\mu_T$ and $n_T$, with $K_T = 12$, $K_R = 24$  and $r=4$.} 
\label{fig:NDT}
\vspace{-0.6cm}
\end{figure}

\begin{examp}
The achievable NDT $\delta_{up}(r,\mu_T,\mu_R)$ in \eqref{Eq_Optimum_NDT}, along with the lower bound $\delta^*(r,\mu_T,\mu_R)$ derived in Lemma~\ref{lem:conv:optimization} in Section \ref{Sec-Converse}, are plotted in Fig.~\ref{fig:NDT} as a function of the users’ cache capacity $\mu_R$ for different values of the parameters $n_T$, $r$ and $\mu_T$. We set the number of ENs and users to $K_T = 12$ and $K_R = 24$ and the fronthaul rate to $r=4$. Note that for non-integer values of $\mu_R K_R$, the achievable NDT is obtained by memory sharing between the receive-side multiplicities $\lfloor \mu_R K_R\rfloor$ and $\lceil \mu_R K_R\rceil$ \cite{integer}. It is observed that caching at the end-users is more effective when the number of EN transmit antennas and/or the transmit-side caches are small. Furthermore, when the transmit-side multiplicity is sufficient to serve all $K_R$ users at the same time, end-user caching only provides local caching gains. In particular, this happens when $\mu_T K_T=6$ and $n_T=4$, in which case the NDT is seen to decrease linearly with $\mu_R K_R$.
\end{examp}


\section{Achievable Scheme}\label{Sec-Achievable}
The achievable scheme generalizes the strategies proposed in \cite{Zhang-2017} and \cite{Naderializadeh-2017} by accounting for fronthaul transmission and for the caches available at the users. As discussed in Section~\ref{Sec-MainResult}, the cache content placement phase uses the same approach proposed in \cite{Naderializadeh-2017}, which guarantees content replication of $\mu_T K_T$ and $m_R=\mu_R K_R$ at the transmit and receive sides, respectively. 

In the content delivery phase, fronthaul transmission provides packets from the requested files to the ENs in order to increase the transmit-side multiplicity $m_T$ to the desired value $m$. This is at the cost of the fronthaul delay 
\begin{equation}\label{eq:FronthaulNDT}
\delta _F(m) = \frac{K_R \left(m-\mu_T K_T\right)^+}{K_T r},
\end{equation}
given that $(m-\mu_T K_T)^+L$  bits need to be delivered for each requested file (see also \cite{Zhang-2017}).

Based on the multiplicities $m_R$ and $m_T$, the number of users that can be served at the same time is \eqref{user}. Since each user has cached a $(1-\mu_R)$-fraction of its requested file, the edge NDT is given by \cite{Zhang-2017}
\begin{equation} \label{eq:edgeNDT}
\delta_E(m)=\frac{K_R(1-\mu_R)}{u(m)}.
\end{equation}

The transmit-side multiplicity $m_T=m$ should be tuned such that the total delivery latency $\delta_E(m)+\delta _F(m)$ is minimized. First, we determine the maximum possible multiplicity $m_{max}$  from the following necessary conditions
\begin{align}
    m \leq K_T , \quad     n_Tm+ K_R \mu_R \leq K_R,
\end{align}
which result in $m_{max}$ given in \eqref{mmax}.
To proceed, we first focus on the case of $\mu_T=0$, and find a close-to-optimal multiplicity $m(r,\mu_R)$. Then, based on the expression for $m(r,\mu_R)$, we propose a specific choice for the multiplicity for the general case where $\mu_T\geq0$. 

To start, in the case when $\mu_T=0$, from \eqref{eq:FronthaulNDT} and \eqref{eq:edgeNDT}, the total NDT is
\begin{eqnarray} \label{Eq:subsecB-mult}
\delta (m) = \frac{K_R m}{K_T r} + \frac{K_R(1-\mu_R)}{u(m)}.
\end{eqnarray}
In order to optimize over $m$, we find the (only) stationary point for function~\eqref{Eq:subsecB-mult} as
\begin{equation} \label{m_0} 
m_0=\sqrt{\frac{K_T(1-\mu_R)r}{n_T}}-\frac{K_R \mu_R}{n_T}.
\end{equation}
We then approximate the integer solution of the original problem to be the nearest positive integer smaller than $m_{max}$, yielding \eqref{Eq_Optimum_Mr}.

For the general case $\mu_T \geq 0$, we propose the choice \eqref{Eq-Optimum-Multiplicity} for the transmit-side multiplicity. Accordingly, when $\mu_T K_T < m(r,\mu_R)$, and hence the transmit-side caches are small, packets are sent over the fronthaul links so that the aggregate multiplicity is equal to the value $m(r,\mu_R)$ selected above when $\mu_T=0$. For the case $\mu_T\geq m(r,\mu_R)$, instead, the transmit-side multiplicity \eqref{Eq-Optimum-Multiplicity} only relies on EN caching, and fronthaul transmission is not carried out. In particular, when  $\mu_T\geq m_{max}$, the maximum multiplicity $m_{max}$ can be guaranteed directly by EN caching. Theorem 1 demonstrates the near-optimality of this choice. 

As illustration for how the user cache capacity $\mu_R K_R$ affects transmit-side multiplicity $ m(r,\mu_T,\mu_R)$ is shown in Fig.~\ref{fig:m} for $K_T = 12$, $K_R = 24$, $n_T=\mu_T K_T=4$. As $\mu_R K_R$ increases, user-side caching becomes more effective, and less EN-side cooperation is needed to null out interference. Accordingly, the transmit-side multiplicity decreases with $\mu_R K_R$, and it depends on $r$ only when $\mu_R K_R$ is sufficiently small. 

\begin{figure}[t!] 
  \centering
\includegraphics[width=0.7\columnwidth]{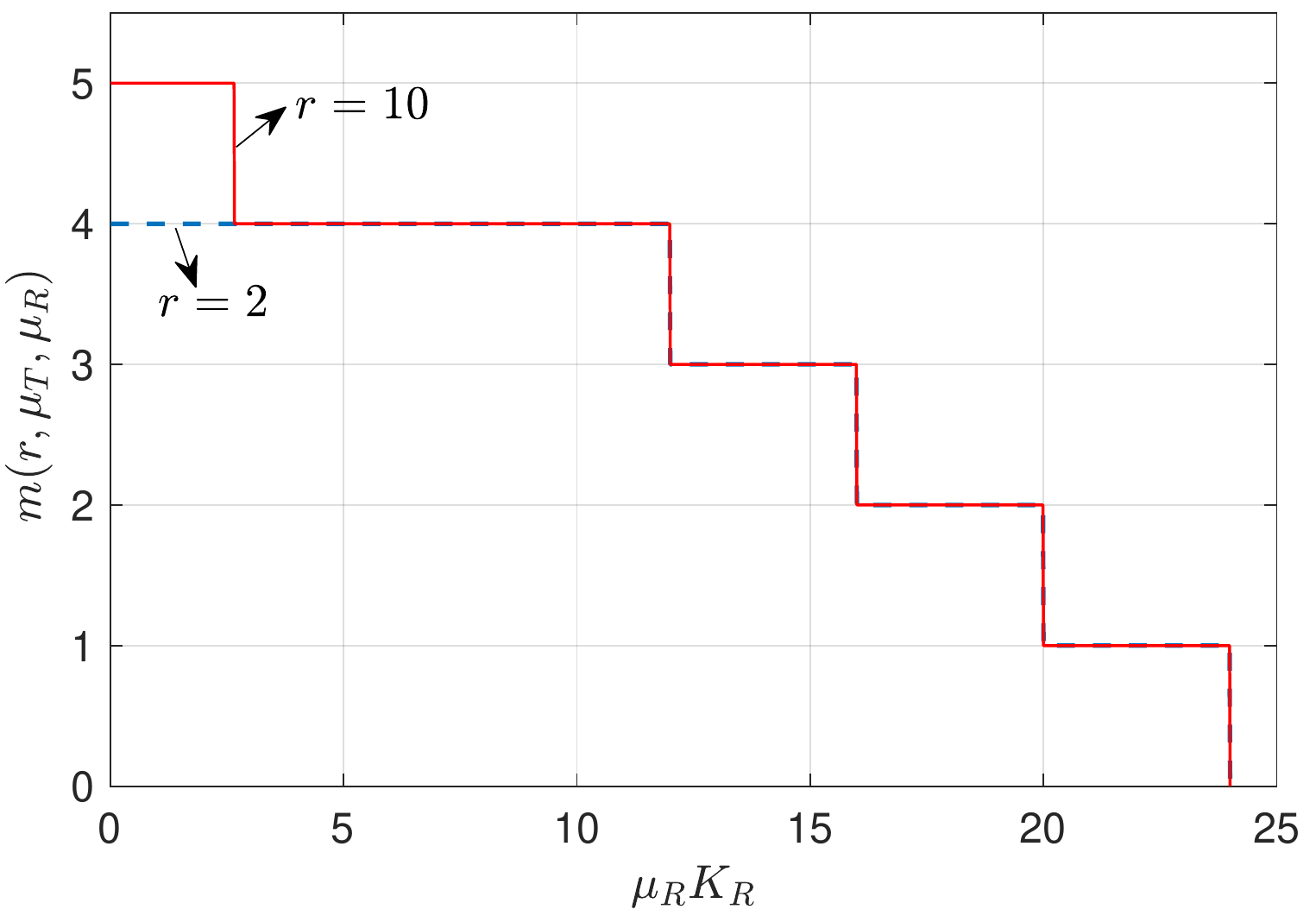}
\vspace{-0.3cm}
\caption{Transmit-side multiplicity $m(r,\mu_T,\mu_R)$ in \eqref{Eq-Optimum-Multiplicity} versus $K_R \mu_R$ for $K_T = 12$, $K_R = 24$, $n_T=\mu_T K_T=4$ and different values of $r$.} 
\label{fig:m}
\vspace{-0.5cm}
\end{figure}

\section{Multiplicative Optimality}\label{Sec-Converse}
In this section we demonstrate that the achievable NDT in Theorem \ref{Th-MainTheorem} is within a multiplicative constant gap of the minimum NDT by proving \eqref{lower}. To this end, we extend the converse proof developed in \cite{Zhang-2017} in order to account for the presence of users' caches. First, without loss of generality, consider a split of each file $W_n$ into $(2^{K_T}-1)\times2^{K_R}$ subsets of packets $\mathcal{W}_n=\{\mathcal{W}_{n\tau_T\tau_R}\}$, such that each part $\mathcal{W}_{n\tau_T\tau_R}$ is indexed by the subsets of indices $\tau_T \subseteq [K_T] \backslash \{\emptyset \}$ and $\tau_R \subseteq [K_R]$. The subset $\mathcal{W}_{n\tau_T\tau_R}$ includes the packets of $\mathcal{W}_n$ that are present at all the ENs $i \in \tau_T$ after fronthaul transmissions and at all the users $j \in \tau_R$. We also define $c_{n\tau_T\tau_R}$ as the number of packets of file $W_n$ that are cached at all the ENs in $\tau_T$, and at all the users in $\tau_R$; $f_{n\tau_T\tau_R}(\mathbf{d})$ to be number of packets from file $W_n$ that are transmitted to all the users in $\tau_T$ via the fronthaul for a given demand vector $\mathbf{d}$ and cached at all the users in $\tau_R$. Note that these quantities are well-defined for every policy. With these definitions, NDT of any achievable policy can be lower bounded by the solution to the following optimization problem:
\begin{align} \label{Eq_MainOpt_Converse}
    &(a) \min_{\substack{\{c_{n\tau_T\tau_R}\} \\ \{f_{n\tau_T\tau_R}(\mathbf{d})\}}} \max_{\mathbf{d}} \quad \delta_E^*\left(\{c_{n\tau_T\tau_R}\}, \{f_{n\tau_T\tau_R}(\mathbf{d})\}, \mathbf{d}\right)+\delta_F^*\left(\mathbf{d}\right)
 \\[-3pt] \nonumber
    & \; \mathrm{subject \quad to}\nonumber  \\ \nonumber
    &(b) \; \sum_{i=1}^{K_T} \sum_{\substack{\tau_T \subseteq [K_T] \\ |\tau_T|=i}} \sum_{j=0}^{K_R} \sum_{\substack{\tau_R \subseteq [K_R] \\ |\tau_R|=j}} \left(c_{n\tau_T\tau_R}+f_{n\tau_T\tau_R}(\mathbf{d})\right) = F,  \\[-10pt] \nonumber  
    &\hspace{2.3in} \forall n \in \mathbf{d}, \forall \mathbf{d}  \\ \nonumber   
    &(c) \; \sum_{n=1}^{N} \sum_{\substack{\tau_T \subseteq [K_T] \\ i \in \tau_T}} \sum_{j=0}^{K_R} \sum_{\substack{\tau_R \subseteq [K_R] \\ |\tau_R|=j}} c_{n\tau_T\tau_R} \leq \mu_T F N, \quad \forall i \in [K_T]  
\end{align}
\begin{align}
    &(d) \; \sum_{n=1}^{N} \sum_{\substack{\tau_R \subseteq [K_R] \\ j \in \tau_R}} \sum_{i=1}^{K_T} \sum_{\substack{\tau_T \subseteq [K_T] \\ |\tau_T|=i}}  \left(c_{n\tau_T\tau_R}+f_{n\tau_T\tau_R}(\mathbf{d})\right) \leq \mu_R F N,  \nonumber \\[-10pt]
    & \hspace{2.3in}\forall j \in [K_R], \forall \mathbf{d}  \nonumber  \\
    &(e) \; \frac{1}{Fr} \sum_{n \in \mathbf{d}} \sum_{\substack{\tau_T \subseteq [K_T] \\ i \in \tau_T}} \sum_{j=0}^{K_R} \sum_{\substack{\tau_R \subseteq [K_R] \\ |\tau_R|=j}} f_{n\tau_T\tau_R}(\mathbf{d}) \leq \delta_F^*(\mathbf{d}), \nonumber\\[-10pt]
    & \hspace{2.3in}\forall i \in [K_T], \forall \mathbf{d}  \nonumber  \\
    &(f)  \{c_{n\tau_T\tau_R},f_{n\tau_T\tau_R}(\mathbf{d})\} \geq 0,  ~~ 0\leq \delta_F^*(\mathbf{d}) \leq \delta_{F_{max}}, 
\end{align}
where function $\delta_E^*(\{c_{n\tau_T\tau_R}\}, \{f_{n\tau_T\tau_R}(\mathbf{d})\}, \mathbf{d})$ is implicitly defined as the minimum edge NDT in \eqref{def:NDT} for given cache and fronthaul policies when the request vector is $\mathbf{d}$, while $\delta_F^*(\mathbf{d})$ is a function for $\mathbf{d}$ that satisfies conditions (e) and (f). We have also defined 
\begin{align}
    \delta_{F_{max}} \triangleq \frac{K_R(m_{max}-\mu_T K_T)^+}{K_T r}.
\end{align}
In \eqref{Eq_MainOpt_Converse}, the equality (b) guarantees the availability of all the requested files; inequalities (c) are due to the fact that the size of the cached content of each EN $i\in[K_T]$ is limited by the cache capacity $\mu_T FN$; similarly, inequalities (d) enforce the cache capacity constraint at each user $j\in[K_R]$;  inequalities (e) follow from the definition of fronthaul NDT in \eqref{def:NDT}, since the left-hand side is the number of packets sent to EN $i$ via the fronthaul link; in (f), $\delta_F^*(\mathbf{d})$ is upper bounded by $\delta_{F_{max}}$ since the maximum multiplicity is $m_{max}$ and hence the total number of bits required via fronthaul is $K_R(m_{max}-\mu_T K_T)^+$. 

In \eqref{Eq_MainOpt_Converse}, the expression of function $\delta_E^*(\{c_{n\tau_T\tau_R}\}, \{f_{n\tau_T\tau_R}(\mathbf{d})\}, \mathbf{d})$ is generally unknown. Notwithstanding this complication, the following Lemma gives a lower bound to the solution of the above optimization problem. Proof can be found in Appendix~\ref{pro:opt}.
\begin{lem}\label{lem:conv:optimization}
    The minimum value of the optimization problem in \eqref{Eq_MainOpt_Converse} is lower bounded by 
    \begin{align}\label{Eq-Lem1}
     f(x)=\frac{K_R}{K_Tr}\left(x - \mu_T K_T\right) + \frac{K_R(1-\mu_R)}{n_T x + \mu_R K_R},
    \end{align}
    where we have defined
    \begin{align} \nonumber
        &x \triangleq \frac{1}{NF} \sum_{i=1}^{K_T} \sum_{j=0}^{K_R} i b_{ij}, ~~~        \tilde{f}_{n\tau_T\tau_R} \triangleq \frac{\sum_{\mathbf{d}: n \in \mathbf{d}} f_{n\tau_T\tau_R}(\mathbf{d}) }{K_R\pi(N-1,K_R-1)}, \\ 
    &\mathrm{and} \quad b_{ij} \triangleq \sum_{\substack{\tau_T \subseteq [K_T] \\ |\tau_T|=i}}  \sum_{\substack{\tau_R \subseteq [K_R] \\ |\tau_R|=j}} \sum_{n=1}^{N} (c_{n\tau_T\tau_R}+\tilde{f}_{n\tau_T\tau_R}).
    \end{align}
\end{lem}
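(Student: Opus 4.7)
\textbf{Proof proposal for Lemma~\ref{lem:conv:optimization}.}
The plan is to reduce the multi-variable optimization \eqref{Eq_MainOpt_Converse} to a single-variable lower bound $f(x)$ in three moves: (i) average the objective over request vectors so that the per-demand fronthaul quantities $f_{n\tau_T\tau_R}(\mathbf{d})$ collapse to the averaged quantities $\tilde f_{n\tau_T\tau_R}$; (ii) produce two separate lower bounds, one on the edge NDT $\delta_E^*$ and one on the fronthaul NDT $\delta_F^*$, each expressible through the scalar $x$; (iii) add them and match the form of $f(x)$.

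First, since the outer $\max_{\mathbf{d}}$ is at least the average over $\mathbf{d}$ drawn uniformly from the set of demand vectors with distinct entries, I would replace each $f_{n\tau_T\tau_R}(\mathbf{d})$ by $\tilde f_{n\tau_T\tau_R}$ in all constraints and in the objective. The normalizing factor $K_R\,\pi(N-1,K_R-1)$ in the definition of $\tilde f$ is precisely the number of such demand vectors in which file $n$ is requested, which is exactly what arises from the symmetrization. After this step the quantities $b_{ij}$ represent the total number of packets of a generic requested file that are placed at some $i$ ENs and $j$ users (via caching plus averaged fronthaul), and constraint (b) reads $\sum_{i,j} b_{ij}=F$ for each requested file, so $\sum_{i,j} b_{ij} = NF$ after summing over the $N$ files.

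Next, for the fronthaul term, summing constraint (e) over $i\in[K_T]$ and then averaging over demands yields
\begin{equation}
K_T r\, \delta_F^* \;\geq\; \frac{1}{NF}\sum_{n,\tau_T,\tau_R} |\tau_T|\,\tilde f_{n\tau_T\tau_R}\cdot\frac{N}{1},
\end{equation}
which, after combining with the cache bound (c) summed over $i$ (giving $\sum_{i,j} i\sum_n \sum_{|\tau_T|=i,|\tau_R|=j}c_{n\tau_T\tau_R} \le K_T\mu_T FN$), leads to $\delta_F^* \geq \frac{K_R}{K_T r}\bigl(x-\mu_T K_T\bigr)$ with $x$ as defined. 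For the edge term, I would establish the DoF-type inequality
\begin{equation}
\delta_E^* \;\geq\; \frac{K_R(1-\mu_R)}{n_T x + \mu_R K_R},
\end{equation}
by a genie-aided argument in the spirit of \cite{Zhang-2017}, extended to account for user caches as in \cite{Naderializadeh-2017}. The idea is that a packet present at $i$ ENs and $j$ users can contribute to serving at most $n_T i + j$ users in one shot, so that the sum-DoF over all packets of a requested file is bounded by a weighted combination of the $b_{ij}$'s; applying Jensen's inequality in the concave function $(n_T i + j)^{-1}$ together with the cache constraint (d) summed over $j$ to control $\sum j b_{ij}\le K_R\mu_R F N$ collapses this into the single-variable bound involving $x$ and $\mu_R K_R$.

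Finally, adding the two bounds gives $\delta_F^*+\delta_E^* \geq f(x)$, which is the claim. I expect the main obstacle to be the edge-NDT bound: one must argue rigorously that no one-shot linear scheme can serve more than $n_T i + j$ users using a packet of multiplicity $(i,j)$ at the transmit and receive sides, and then aggregate these per-packet bounds into a per-file bound using only the convex function $1/(n_T x + \mu_R K_R)$. This requires carefully counting how an arbitrary ZF/coded-multicasting precoder can allocate its beamforming dimensions across packets of different multiplicities, and invoking Jensen's inequality together with the constraints $\sum_{i,j} i b_{ij}=NFx$ and $\sum_{i,j} j b_{ij}\le NF\mu_R K_R$ to reach \eqref{Eq-Lem1}; the remaining algebra is routine.
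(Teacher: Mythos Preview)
Your proposal is correct and mirrors the paper's proof: the paper also replaces the max over demands by an average, invokes a per-packet DoF bound (its Lemma~3, stating that $u \leq \min_l n_T|\tau_{Tl}| + |\tau_{Rl}|$) to lower-bound $\delta_E^*$ by $\sum_{i,j} b_{ij}/(in_T+j)$, applies Cauchy--Schwarz where you invoke Jensen (the two are equivalent here since $t\mapsto 1/t$ is \emph{convex}---note your slip calling it concave, which would flip the inequality), and bounds $\delta_F^*$ via constraints (c) and (e) exactly as you outline, finally using (d) to control $\sum_{i,j} j\,b_{ij}\le \mu_R K_R NF$. The only cosmetic issue is that your displayed fronthaul inequality is dimensionally garbled; otherwise the structure and all the key steps coincide with the paper's argument.
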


Finally, the following lemma analyzes the gap between the lower bound derived in Lemma \ref{lem:conv:optimization} and the upper bound \eqref{Eq_Optimum_NDT}, completing the proof of Theorem \ref{Th-MainTheorem}. A proof can be found in Appendix~\ref{pro:gap}.
\begin{lem} \label{lem:gap}
   Function $f(x)$ in \eqref{Eq-Lem1} is lower bounded by $\frac{2}{3}\delta_{up}(r,\mu_T,\mu_R)$, where $\delta_{up}(r,\mu_T,\mu_R)$ is given in Theorem \ref{Th-MainTheorem}.
\end{lem}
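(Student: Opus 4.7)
The plan is to prove the equivalent statement $\delta_{up}(r,\mu_T,\mu_R)\leq\tfrac{3}{2}\min_{x\geq 0}f(x)$, from which the lemma follows at once since $f(x)\geq \min_y f(y)$. The starting point is a direct analysis of $f$: one computes
\[
f'(x)=\frac{K_R}{K_T r}-\frac{K_R(1-\mu_R)n_T}{(n_T x+\mu_R K_R)^2},
\]
which shows that $f$ is strictly convex on $[0,\infty)$ with unique unconstrained minimizer
\[
x^\star=\sqrt{\frac{K_T r(1-\mu_R)}{n_T}}-\frac{K_R\mu_R}{n_T}=m_0,
\]
matching \eqref{m_0}. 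If $m_0<0$ the minimum is attained at $x=0$; otherwise the minimum is $f(m_0)$, which has a clean closed form.

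I would then split into cases according to the three branches of $m(r,\mu_T,\mu_R)$ in \eqref{Eq-Optimum-Multiplicity}. In \emph{Case (i)}, $\mu_T K_T<m(r,\mu_R)$, where fronthaul transmission is active and the multiplicity equals either $[m_0]$ (when $r<r_{th}$) or $m_{max}$ (when $r\geq r_{th}$), I substitute these values into \eqref{Eq_Optimum_NDT} and bound each summand term-by-term against $f(m_0)$, exploiting $|m_0-[m_0]|\leq 1$, the definition of $r_{th}$ as a point where the two branches of \eqref{Eq_Optimum_Mr} meet, and the first-order condition $f'(m_0)=0$. In \emph{Case (ii)}, $m(r,\mu_R)\leq\mu_T K_T\leq m_{max}$, the fronthaul is inactive and $\delta_{up}$ reduces to the edge term $K_R(1-\mu_R)/(n_T\mu_T K_T+\mu_R K_R)$; since here $\mu_T K_T\geq m_0$, convexity of $f$ together with monotonicity of the edge term in $x$ gives a gap even smaller than $3/2$ directly. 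In \emph{Case (iii)}, $\mu_T K_T>m_{max}$, the multiplicity saturates at $m_{max}$, the fronthaul is again inactive, and $\delta_{up}=K_R(1-\mu_R)/\min\{K_R,n_T m_{max}+\mu_R K_R\}$; using \eqref{mmax} to resolve the $\min$ and comparing with $f$ evaluated at an appropriate point inside $[0,m_{max}]$ closes the argument.

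The main obstacle is Case (i) with $r<r_{th}$, where rounding $m_0$ to an integer inflates the fronthaul and edge summands of $\delta_{up}$ simultaneously. The key inequality to establish is
\[
\frac{K_R([m_0]-\mu_T K_T)^+}{K_T r}+\frac{K_R(1-\mu_R)}{n_T[m_0]+\mu_R K_R}\;\leq\;\frac{3}{2}\,f(m_0),
\]
which I would obtain by isolating the rounding error (bounded by $1$) in each summand and using $f'(m_0)=0$ to couple the two inflated terms through a single identity relating $K_R/(K_T r)$ to $K_R(1-\mu_R)n_T/(n_T m_0+\mu_R K_R)^2$. The resulting constant is precisely $3/2$, reflecting the unavoidable penalty incurred when an integer-valued multiplicity is chosen close to the continuous optimum $m_0$ while simultaneously balancing the fronthaul latency against the edge latency.
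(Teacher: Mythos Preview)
Your plan has a genuine gap: you propose to prove $\delta_{up}\le\tfrac{3}{2}\min_{x\ge 0}f(x)$, but this inequality is \emph{false} in general. The variable $x$ in Lemma~\ref{lem:conv:optimization} is not free over $[0,\infty)$; it is the quantity $\tfrac{1}{NF}\sum_{i,j} i\,b_{ij}$ arising from a feasible point of~\eqref{Eq_MainOpt_Converse}, and the paper's proof begins precisely by extracting the domain constraint $x\in[x_{\min},x_{\max}]$ with $x_{\min}=\max\{\mu_TK_T,1\}$. Without the lower bound $x\ge\mu_TK_T$, the fronthaul term $\tfrac{K_R}{K_Tr}(x-\mu_TK_T)$ can be arbitrarily negative, and $\min_{x\ge 0}f(x)$ can drop well below $\tfrac{2}{3}\delta_{up}$, even below zero. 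A concrete counterexample: take $K_T=K_R=10$, $n_T=1$, $\mu_R=0$, $\mu_T=0.5$, $r=0.1$. Then $m_0=1$, you are in your Case~(ii) with $m(r,\mu_R)=1\le\mu_TK_T=5\le m_{\max}=10$, $\delta_{up}=2$, yet $f(m_0)=10(1-5)+10=-30$, so $\delta_{up}\le\tfrac{3}{2}f(m_0)$ fails badly.

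Your Case~(ii) sketch is exactly where this bites. You write ``since here $\mu_TK_T\ge m_0$, convexity of $f$ together with monotonicity of the edge term gives a gap even smaller than $3/2$ directly,'' but convexity and monotonicity give you nothing once you allow $x<\mu_TK_T$: the unconstrained minimizer sits at $m_0<\mu_TK_T$ and the negative fronthaul contribution swamps the edge term. The paper's route is to first pin $x\ge x_{\min}$, so that in this regime $f_{\min}=f(\mu_TK_T)=K_R(1-\mu_R)/(n_T\mu_TK_T+\mu_RK_R)$, which coincides with (or is close to) $\delta_{up}$; it then constructs a piecewise-linear minorant $\delta'_{lb}$ of $f_{\min}$ and compares $\delta_{up}$ with $\delta'_{lb}$ branch by branch. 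Your Case~(i) outline---using $|m_0-[m_0]|\le 1$ and the stationarity identity---is in the right spirit and close to what the paper does there, but the overall argument cannot proceed until you incorporate the domain restriction on $x$.
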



\section{Conclusions}\label{Sec-Conclusion}
For a cache-enabled cloud-RAN architecture where both the ENs and the end-users have caches, this paper has characterized the minimum delivery latency in the high-SNR to within a multiplicative gap of $3/2$. Under the practical constraint that the ENs can only transmit using one-shot linear precoding, the main result shows that the cooperation gains accrued by EN cooperation via EN caching and fronthaul transmission are additive with respect to the multicasting gains offered by end-user caching.

\section*{Acknowledgements}
Jingjing Zhang and Osvaldo Simeone have received funding from the European Research Council (ERC) under the European Union's Horizon 2020 Research and Innovation Programme (Grant Agreement No. 725731).

\bibliographystyle{ieeetr}

\begin{thebibliography}{}

\end{thebibliography}


\begin{thebibliography}{1}

\bibitem{Kangasharju-2002} J. Kangasharju, J. Roberts, and K. Ross, ``Object Replication Strategies in Content Distribution Networks,'' \emph{Computer Communications}, vol. 38, no. 4, pp. 376--383, 2002.

\bibitem{Nygren-2010} E. Nygren, R.K. Sitaraman, and J. Sun. ``The Akamai Network: A platform for high-performance Internet application'', \emph{ACM SIGOPS Operating Systems Review}, vol. 44, no. 3, pp. 2--19, 2010.

\bibitem{Shanmugam-2013} K. Shanmugam, N. Golrezaei, A. G. Dimakis, A. F. Molisch and G. Caire, ``FemtoCaching: Wireless Content Delivery Through Distributed Caching Helpers,'' \emph{IEEE Trans. Inf. Theory}, vol. 59, no. 12, pp. 8402--8413, Dec. 2013.

\bibitem{Liu-2015} A. Liu and V. K. N. Lau, ''Exploiting Base Station Caching in MIMO Cellular Networks: Opportunistic Cooperation for Video Streaming,'' \emph{IEEE Trans. Signal Process.}, vol. 63, no. 1, pp. 57--69, Jan. 2015.

\bibitem{MaddahAli-2014} M. A. Maddah-Ali and U. Niesen, ``Fundamental limits of caching,'' \emph{IEEE Trans. Inf. Theory}, vol. 60, no. 5, pp. 2856--2867, 2014.

\bibitem{Karamchandani-2016} N. Karamchandani, U. Niesen, M. A. Maddah-Ali, and S. N. Diggavi, ``Hierarchical Coded Caching,''  \emph{IEEE Trans. Inf. Theory}, vol. 62, no. 6, pp. 3212--3229, 2016.

\bibitem{Pedarsani-2016} R. Pedarsani, M. A. Maddah-Ali and U. Niesen, ``Online Coded Caching,'' \emph{IEEE/ACM Trans. Netw.}, vol. 24, no. 2, pp. 836-845, April 2016.

\bibitem{Zhang-2017-0} J. Zhang and P. Elia, ``Fundamental limits of cache-aided wireless BC: Interplay of coded-caching and CSIT feedback,'' \emph{IEEE Trans. Inf. Theory}, vol. 63, no. 5, pp. 3142--3160, 2017.

\bibitem{Sengupta-2017} A. Sengupta, R. Tandon, and O. Simeone, ``Fog-aided wireless networks for content delivery: Fundamental latency tradeoffs,'' \emph{IEEE Trans. Inf. Theory}, vol. 63, no. 10, pp. 6650--6678, Oct. 2017.

\bibitem{Tandon-2016} R. Tandon and O. Simeone, ``Cloud-aided wireless networks with edge caching: Fundamental latency trade-offs in fog radio access networks,'' in \emph{Proc. IEEE Int. Symp. on Inf. Theory (ISIT)}, pp. 2029--2033, 2016.

\bibitem{Azimi-2017} S. M. Azimi, O. Simeone, A. Sengupta, and R. Tandon, ``Online Edge Caching in Fog-Aided Wireless Network,'' in \emph{Proc. IEEE Int. Symp. on Inf. Theory (ISIT)}, pp. 1217--1221, 2017.

\bibitem{Zhang-2017}
J. Zhang, O. Simeone, ``Fundamental Limits of Cloud and Cache-Aided Interference Management with Multi-Antenna Base Stations'' in \emph{Proc. IEEE Int. Symp. on Inf. Theory (ISIT)}, pp. 1425--1429, 2018.

\bibitem{Shariatpanahi-2016}
S. P. Shariatpanahi, S. A. Motahari and B. H. Khalaj, ``Multi-Server Coded Caching,'' \emph{IEEE Trans. Inf. Theory}, vol. 62, no. 12, pp. 7253--7271, Dec. 2016.

\bibitem{Naderializadeh-2017} 
N. Naderializadeh, M. A. Maddah-Ali and A. S. Avestimehr, "Fundamental Limits of Cache-Aided Interference Management," \emph{IEEE Trans. Inf. Theory}, vol. 63, no. 5, pp. 3092--3107, May 2017.

\bibitem{Madda-15}
M. A. Maddah-Ali and U. Niesen, ``Cache-aided interference channels,'' in \emph{Proc. IEEE Int. Symp. on Inf. Theory (ISIT)}, pp. 809--813, 2015.


\bibitem{integer}
H. Ghasemi and A.Ramamoorthy, "Improved Lower Bounds for Coded Caching," \emph{IEEE Trans. Inf. Theory}, vol. 63, no. 7, pp. 4388-4413, July, 2017.

\end{thebibliography}


\newpage
\begin{appendices}
\section{Proof of Lemma \ref{lem:conv:optimization}}\label{pro:opt}
To prove Lemma~\ref{lem:conv:optimization}, we substitute the maximum over all the possible request vectors in the objective of \eqref{Eq_MainOpt_Converse} with an average over them. The solution of the resulting problem yields a lower bound to the solution of the orginal problem. Mathematically, the objective (a) in \eqref{Eq_MainOpt_Converse} is substituted with
\begin{align}  \label{min}
     \min_{\substack{\{c_{n\tau_T\tau_R}\} \\ \{f_{n\tau_T\tau_R}(\mathbf{d})\}}} &\frac{1}{\pi(N,K_R)} \\ \nonumber
     & \times \sum_{\mathbf{d}} \left[ \delta_E^*\left(\{c_{n\tau_T\tau_R}\}, \{f_{n\tau_T\tau_R}(\mathbf{d})\}, \mathbf{d}\right)+\delta_F^*\left(\mathbf{d}\right) \right],
\end{align}
where we have defined $\pi(N,K_R)\triangleq N! / (N-K_R)!$. In order to deal with the unknown function $\delta_E^*(\{c_{n\tau_T\tau_R}\},\{f_{n\tau_T\tau_R}(\mathbf{d})\}, \mathbf{d})$, we need the following lemma.
\begin{lem}\label{lem-MuxCodedCaching-Gain}
Define $\tau_{Tl}$ as the subset of edge nodes that have access to the packet $W_{n_lf_l}$ after the fronhaul transmission, and $\tau_{Rl}$ as the subset of users that have cached the packet $W_{n_lf_l}$. Then, the number $u$ of users that can be served at the same time is upper bounded by
\begin{equation}
    u \leq \min_{l \in [u]} |\tau_{Tl}| n_T + |\tau_{Rl}|.
\end{equation}
\end{lem}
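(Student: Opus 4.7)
The plan is to fix an arbitrary user $l \in [u]$ and establish the single-user bound $u \leq |\tau_{Tl}| n_T + |\tau_{Rl}|$; the lemma then follows by minimizing over $l$. The one-shot linear precoding assumption forces the symbol $s_l$ that encodes $W_{n_l f_l}$ to be transmitted only from those ENs that actually hold this packet after the fronthaul phase, namely the ENs in $\tau_{Tl}$. Collecting the beamforming coefficients used by these ENs for $s_l$ gives an effective beamformer $\mathbf{v}_l \in \mathbb{C}^{|\tau_{Tl}| n_T \times 1}$, and only this vector is under the designer's control for the delivery of $s_l$.

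Next, I would track how $s_l$ appears at each of the other $u-1$ served users. At user $k \ne l$, the $s_l$-contribution to the received signal is $\mathbf{h}_{k,\tau_{Tl}}^H \mathbf{v}_l s_l$, where $\mathbf{h}_{k,\tau_{Tl}}$ denotes the stacked channel from the ENs in $\tau_{Tl}$ to user $k$. For user $k$ to decode its own symbol $s_k$ at the target rate $\log(\mathrm{SNR})$, this interfering term must be dealt with in exactly one of two ways: either $s_l$ is locally available at user $k$, which under uncoded caching happens precisely when $k \in \tau_{Rl}$, in which case the interference can be subtracted; or $\mathbf{v}_l$ must satisfy $\mathbf{h}_{k,\tau_{Tl}}^H \mathbf{v}_l = 0$, i.e., be zero-forced toward $k$. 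Note also that $l \notin \tau_{Rl}$, since $W_{n_l f_l}$ is being delivered to $l$ (rather than already cached there).

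Counting constraints, the number of ZF conditions imposed on $\mathbf{v}_l$ equals $|[u]\setminus(\{l\}\cup\tau_{Rl})| = u - 1 - |\tau_{Rl}\cap[u]|$. For $\mathbf{v}_l$ to meet all of these and still deliver $s_l$ to user $l$ at rate $\log(\mathrm{SNR})$, the residual null space must contain a vector with $\mathbf{h}_{l,\tau_{Tl}}^H \mathbf{v}_l \ne 0$. Since the channel coefficients are drawn from a continuous distribution, such a vector exists generically provided the null space is nontrivial, which requires the number of ZF constraints to be at most $|\tau_{Tl}| n_T - 1$. Rearranging yields
\[
u \leq |\tau_{Tl}|\,n_T + |\tau_{Rl}\cap[u]| \leq |\tau_{Tl}|\,n_T + |\tau_{Rl}|,
\]
and minimizing the right-hand side over $l \in [u]$ gives the claimed bound.

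The main obstacle is justifying the ``zero-force or cache-subtract'' dichotomy rigorously under the one-shot linear precoding model at high SNR. Any interfering contribution that is neither nulled at the transmitter nor known at the receiver would have power scaling with $\mathrm{SNR}$ and hence cost one DoF, contradicting the assumption that user $k$ is served at per-symbol rate $\log(\mathrm{SNR})$. I would formalize this via the same high-SNR/DoF argument used in \cite{Naderializadeh-2017} and \cite{Shariatpanahi-2016} for cache-aided interference channels, which guarantees that the only admissible interference management under one-shot linear precoding is ZF plus cache cancellation.
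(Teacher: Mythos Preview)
Your proposal is correct and is precisely the argument the paper defers to via its citations: the paper's own proof is only a pointer to Lemma~1 in \cite{Zhang-2017} (the ZF dimension count on $\mathbf{v}_l \in \mathbb{C}^{|\tau_{Tl}| n_T}$) together with Lemma~3 in \cite{Naderializadeh-2017} (the observation that users in $\tau_{Rl}$ can subtract $s_l$ from their caches rather than requiring a ZF constraint). Your per-user counting, $(u-1)-|\tau_{Rl}\cap[u]|\leq |\tau_{Tl}|n_T-1$, and the high-SNR justification of the zero-force/cache-subtract dichotomy are exactly what those references supply, so there is no substantive difference in approach.
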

\begin{proof}
    This lemma is generalization of Lemma 1 in \cite{Zhang-2017} to the case where we have cache at the users. The main proof is considering the role of receivers' caches in a similar way as in Lemma 3 of \cite{Naderializadeh-2017}, and the rest of the proof is the same. 
\end{proof}


Using the above lemma we will have the following lower bound on the minimum edge NDT:
\begin{align}
    &\delta_E^*\left(\{c_{n\tau_T\tau_R}\}, \{f_{n\tau_T\tau_R}(\mathbf{d})\}, \mathbf{d}\right) \geq \frac{(1-\mu_R)}{F} \\ \nonumber
    & \times \sum_{k=1}^{K_R} \sum_{i=1}^{K_T} \sum_{\substack{\tau_T \subseteq [K_T] \\ |\tau_T|=i}} \sum_{j=0}^{K_R} \sum_{\substack{\tau_R \subseteq [K_R] \\ |\tau_R|=j}} \frac{c_{d_k\tau_T\tau_R}+f_{d_k\tau_T\tau_R}(\mathbf{d})}{i n_T+j},
\end{align}
since at most $i n_T+j$ users can be served simultaneously when the multiplicities at the ENs and the users are $i$ and $j$, respectively. 
Now we lower bound the first term in \eqref{min} as
\begin{align} \nonumber
    &\frac{1}{\pi(N,K_R)} \sum_{\mathbf{d}} \delta_E^*\left(\{c_{n\tau_T\tau_R}\}, \{f_{n\tau_T\tau_R}(\mathbf{d})\}, \mathbf{d}\right) \stackrel{(a)}  \geq  \frac{1-\mu_R}{F\pi(N,K_R)} \\ \nonumber
    &\times \sum_{\mathbf{d}}\sum_{k=1}^{K_R} \sum_{i=1}^{K_T} \sum_{\substack{\tau_T \subseteq [K_T] \\ |\tau_T|=i}} \sum_{j=0}^{K_R} \sum_{\substack{\tau_R \subseteq [K_R] \\ |\tau_R|=j}} \frac{c_{d_k\tau_T\tau_R}+f_{d_k\tau_T\tau_R}(\mathbf{d})}{i n_T+j} \\ \nonumber
    &=\frac{1-\mu_R}{F\pi(N,K_R)} \sum_{i=1}^{K_T} \sum_{j=0}^{K_R} \frac{1}{i n_T+j} \times \\ \nonumber
    & \quad \quad \quad \sum_{\mathbf{d}} \sum_{k=1}^{K_R} \sum_{\substack{\tau_T \subseteq [K_T] \\ |\tau_T|=i}}  \sum_{\substack{\tau_R \subseteq [K_R] \\ |\tau_R|=j}} (c_{d_k\tau_T\tau_R}+f_{d_k\tau_T\tau_R}(\mathbf{d})) \\ \nonumber
    &\stackrel{(b)}=\frac{1-\mu_R}{F\pi(N,K_R)} \sum_{i=1}^{K_T} \sum_{j=0}^{K_R} \frac{K_R \pi(N-1,K_R-1)}{i n_T+j} \times \\ \nonumber
    & \quad \quad  \quad \sum_{\substack{\tau_T \subseteq [K_T] \\ |\tau_T|=i}}  \sum_{\substack{\tau_R \subseteq [K_R] \\ |\tau_R|=j}} \sum_{n=1}^{N} (c_{n\tau_T\tau_R}+\tilde{f}_{n\tau_T\tau_R}) \\ \nonumber
    &=\frac{K_R(1-\mu_R)}{NF} \sum_{i=1}^{K_T} \sum_{j=0}^{K_R} \frac{1}{i n_T+j} \times \\ \nonumber
    & \quad \quad  \quad \sum_{\substack{\tau_T \subseteq [K_T] \\ |\tau_T|=i}}  \sum_{\substack{\tau_R \subseteq [K_R] \\ |\tau_R|=j}} \sum_{n=1}^{N} (c_{n\tau_T\tau_R}+\tilde{f}_{n\tau_T\tau_R}) \\ \nonumber
    &\stackrel{(c)}=\frac{K_R(1-\mu_R)}{NF} \sum_{i=1}^{K_T} \sum_{j=0}^{K_R} \frac{b_{ij}}{i n_T+j} \\ \nonumber
    &\stackrel{(d)} \geq \frac{K_R(1-\mu_R)}{NF}  \frac{\left(\sum_{i=1}^{K_T} \sum_{j=0}^{K_R}b_{ij}\right)^2}{\sum_{i=1}^{K_T} \sum_{j=0}^{K_R}(i n_T+j)b_{ij}} \\ \nonumber
    &\stackrel{(e)} = \frac{K_R(1-\mu_R)}{NF}  \frac{(NF)^2}{\sum_{i=1}^{K_T} \sum_{j=0}^{K_R}(i n_T+j)b_{ij}} \\ \nonumber
    & = \frac{K_R(1-\mu_R)}{\frac{n_T}{NF}\sum_{i=1}^{K_T} \sum_{j=0}^{K_R}ib_{ij}+\frac{1}{NF}\sum_{i=1}^{K_T} \sum_{j=0}^{K_R}jb_{ij}},
\end{align}
where (a) holds because of Lemma~\ref{lem-MuxCodedCaching-Gain}; in (b) we have defined
\begin{align}
    \tilde{f}_{n\tau_T\tau_R} \triangleq \frac{\sum_{\mathbf{d}: n \in \mathbf{d}} f_{n\tau_T\tau_R}(\mathbf{d}) }{K_R\pi(N-1,K_R-1)};
\end{align}
in (c) we have defined
\begin{align}
    b_{ij} \triangleq \sum_{\substack{\tau_T \subseteq [K_T] \\ |\tau_T|=i}}  \sum_{\substack{\tau_R \subseteq [K_R] \\ |\tau_R|=j}} \sum_{n=1}^{N} (c_{n\tau_T\tau_R}+\tilde{f}_{n\tau_T\tau_R});
\end{align}
in (d) we have used the inequality
\begin{align}
    \left(\sum_{i,j}{u_{ij}v_{ij}}\right) ^ 2 \leq \left(\sum_{i,j}{u_{ij}^2}\right)\left(\sum_{i,j}{v_{ij}^2}\right);
\end{align}
with  $u_{ij}=\sqrt{b_{ij} / (in_T+j)}$ and $v_{ij}=\sqrt{b_{ij} \times (in_T+j)}$; and finally (e) results from the equality  $\sum_{i=1}^{K_T} \sum_{j=0}^{K_R}b_{ij}=NF$. This results from summing up (18b) for all $\pi(N,K_R)$ request vectors and for all $K_R$  files in each request vector $\mathbf{d}$ as follows:
\begin{align} \label{eq:bij}
    &\pi(N,K_R) K_R F \\ \nonumber
    &= \sum_{\mathbf{d}} \sum_{n \in \mathbf{d}} \sum_{i=1}^{K_T} \sum_{\substack{\tau_T \subseteq [K_T] \\ |\tau_T|=i} }  \sum_{j=0}^{K_R} \sum_{\substack{\tau_R \subseteq [K_R] \\ |\tau_R|=j} } (c_{n \tau_T \tau_R}+f_{n \tau_T \tau_R}(\mathbf{d})) \\ \nonumber
    &=K_R \pi(N-1,K_R-1) \times \\ \nonumber
    &\quad \quad\quad \sum_{i=1}^{K_T} \sum_{\substack{\tau_T \subseteq [K_T] \\ |\tau_T|=i} } \sum_{n=1}^{N} \sum_{j=0}^{K_R} \sum_{\substack{\tau_R \subseteq [K_R] \\ |\tau_R|=j} } (c_{n \tau_T \tau_R}+\tilde{f}_{n \tau_T \tau_R}) \\ \nonumber
    &=K_R \pi(N-1,K_R-1) \times \\ \nonumber
    &\quad \quad\quad \sum_{i=1}^{K_T} \sum_{j=0}^{K_R} \sum_{\substack{\tau_T \subseteq [K_T] \\ |\tau_T|=i} }   \sum_{\substack{\tau_R \subseteq [K_R] \\ |\tau_R|=j} } \sum_{n=1}^{N} (c_{n \tau_T \tau_R}+\tilde{f}_{n \tau_T \tau_R}) \\ \nonumber
    &=K_R \pi(N-1,K_R-1) \sum_{i=1}^{K_T} \sum_{j=0}^{K_R} b_{ij}.
\end{align}

Now we lower bound for the term related to the fronthaul delay as follows:
\begin{align} \label{eq:df}
    &\frac{1}{\pi(N,K_R)} \sum_{\mathbf{d}} {\delta_F^*(\mathbf{d})} \\ \nonumber
    & \stackrel{}{\geq} \frac{1}{\pi(N,K_R)} \sum_{\mathbf{d}}  \frac{1}{K_T} \sum_{i=1}^{K_T} \frac{1}{Fr} \sum_{n \in \mathbf{d}} \sum_{\substack{\tau_T \subseteq [K_T] \\ i \in \tau_T}} f_{n\tau_T} (\mathbf{d}) \\ \nonumber
    &\stackrel{}{=} \frac{1}{\pi(N,K_R)} \frac{1}{K_T Fr}\sum_{\mathbf{d}} \sum_{n \in \mathbf{d}}   \sum_{i=1}^{K_T} i  \sum_{\substack{\tau_T \subseteq [K_T] \\ |\tau_T|=i}} f_{n\tau_T} (\mathbf{d}) \\ \nonumber
    &\stackrel{}{=} \frac{1}{\pi(N,K_R)} \frac{1}{K_T Fr} \sum_{i=1}^{K_T} i \times \\ \nonumber
    &\quad\quad\quad\quad \sum_{\substack{\tau_T \subseteq [K_T] \\ |\tau_T|=i}}      K_R \pi(N-1,K_R-1) \sum_{n=1}^{N} \tilde{f}_{n\tau_T} \\ \nonumber
    &\stackrel{}{=}\frac{K_R}{NK_TFr} \sum_{i=1}^{K_T} i \sum_{\substack{\tau_T \subseteq [K_T] \\ |\tau_T|=i}} \sum_{n=1}^{N} \tilde{f}_{n\tau_T} \\ \nonumber
    &\stackrel{}{=}\frac{K_R}{NK_TFr} \sum_{i=1}^{K_T} i \left(\sum_{j=0}^{K_R} b_{ij} - \sum_{\substack{\tau_T \subseteq [K_T] \\ |\tau_T|=i}} \sum_{n=1}^{N} c_{n\tau_T}  \right) \\ \nonumber
    &\stackrel{}{=}\frac{K_R}{K_Tr}\left(\frac{1}{NF} \sum_{i=1}^{K_T} \sum_{j=0}^{K_R}ib_{ij} - \frac{1}{NF} \sum_{i=1}^{K_T} i \sum_{\substack{\tau_T \subseteq [K_T] \\ |\tau_T|=i}} \sum_{n=1}^{N} c_{n\tau_T}  \right) \\ \nonumber
    &\stackrel{(a)}{\geq} \frac{K_R}{K_Tr}\left(\frac{1}{NF} \sum_{i=1}^{K_T} \sum_{j=0}^{K_R}ib_{ij} - \mu_T K_T\right),
\end{align}
in which we have defined
\begin{align} \nonumber
    c_{n\tau_T} &\triangleq \sum_{j=0}^{K_R} \sum_{\substack{\tau_R \subseteq [K_R] \\ |\tau_R|=j} } c_{n\tau_T\tau_R} \\ \nonumber
    f_{n\tau_T} &\triangleq \sum_{j=0}^{K_R} \sum_{\substack{\tau_R \subseteq [K_R] \\ |\tau_R|=j} } f_{n\tau_T\tau_R} \\ \nonumber
    \tilde{f}_{n\tau_T} &\triangleq \sum_{j=0}^{K_R} \sum_{\substack{\tau_R \subseteq [K_R] \\ |\tau_R|=j} } \tilde{f}_{n\tau_T\tau_R},
\end{align}
and in (a) we have used the following inequality due to the edge nodes cache size constraint in (18c)
\begin{align}
    \mu_T F N K_T &\geq \sum_{i=1}^{K_T} \sum_{n=1}^{N} \sum_{\substack{\tau_T \subseteq [K_T] \\ i \in \tau_T}} c_{n\tau_T} \\ \nonumber
    &= \sum_{n=1}^{N} \sum_{i=1}^{K_T}  \sum_{\substack{\tau_T \subseteq [K_T] \\ i \in \tau_T}} c_{n\tau_T} \\ \nonumber
    &= \sum_{n=1}^{N} \sum_{i=1}^{K_T} i \sum_{\substack{\tau_T \subseteq [K_T] \\ |\tau_T|=i}}  c_{n\tau_T} \\ \nonumber
     &=\sum_{i=1}^{K_T} i   \sum_{\substack{\tau_T \subseteq [K_T] \\ |\tau_T|=i}}  \sum_{n=1}^{N} c_{n\tau_T}. 
\end{align}
Finally, we can add up the terms corresponding to the lower bounds for the delay and edge NDT to arrive at the NDT lower bound 
\begin{align}
    &\frac{K_R}{K_Tr}\left(\frac{1}{NF} \sum_{i=1}^{K_T} \sum_{j=0}^{K_R}ib_{ij} - \mu_T K_T\right) + \\ \nonumber
    &\frac{K_R(1-\mu_R)}{\frac{n_T}{NF}\sum_{i=1}^{K_T} \sum_{j=0}^{K_R}ib_{ij}+\frac{1}{NF}\sum_{i=1}^{K_T} \sum_{j=0}^{K_R}jb_{ij}} \\ \nonumber
    & \geq \frac{K_R}{K_Tr}\left(x - \mu_T K_T\right) + \frac{K_R(1-\mu_R)}{n_T x + \mu_R K_R},
\end{align}
in which we have defined
\begin{align}
    x \triangleq \frac{1}{NF} \sum_{i=1}^{K_T} \sum_{j=0}^{K_R} i b_{ij},
\end{align}
and have used the following inequality due to cache size constraint at the users stated in (18d)
\begin{align}
    &\frac{1}{NF} \sum_{i=1}^{K_T} \sum_{j=0}^{K_R} j b_{ij} \\ \nonumber
    &=\frac{1}{NF} \sum_{j=1}^{K_R} j \sum_{\substack{\tau_R \subseteq [K_R] \\ |\tau_R|=j} } \sum_{n=1}^{N} \sum_{i=1}^{K_T} \sum_{\substack{\tau_T \subseteq [K_T] \\ |\tau_T|=i} } (c_{n\tau_T\tau_R} + \tilde{f}_{n\tau_T\tau_R}) \\ \nonumber
    & \leq \mu_R K_R.
\end{align}

\section{Proof of Lemma \ref{lem:gap}} \label{pro:gap}

To prove the inequality $f(x)\geq 2 \delta_{up}(r,\mu_T,\mu_R)/3$, we first focus on the minimum $f_{min}$ of $f(x)$.

To this end, we first derive the domain of function $f(x)$. From \eqref{Eq_MainOpt_Converse} and \eqref{eq:df}, we can have the inequalities $K_R(x-\mu_T K_T)/(K_T r)\leq \delta_F^*(\mathbf{d}) \leq \delta_{F_{max}}$. Hence, the maximum value of $x$ is given as $x_{max}=\max\{m_{max},\mu_T K_T\}$. We also have the inequality $\sum_{i=1}^{K_T} \sum_{j=0}^{K_R} i b_{ij}\geq NF$ due to the equality $\sum_{i=1}^{K_T} \sum_{j=0}^{K_R} b_{ij}=NF$ in \eqref{eq:bij}. This yields the necessary condition $x\leq 1$. Also because $x\geq 0$ , the minimum value is given as $x_{min}=\max\{\mu_T K_T,1\}$. Hence, $x$ lies in the interval $[x_{min},x_{max}]$. Since function $f(x)$ is convex for $x>0$, and the only stationary point is $x=m_0$, i.e., $f'(m_0)=0$, where we have $m_0=\sqrt{K_T(1-\mu_R)r/n_T}-K_R \mu_R/n_T$. Therefore, the desired minimum $f_{min}$ is given as 
\begin{align} 
f_{min}=\left\{
\begin{array}{ll} 
f(m_0), &  \text{if}~ x_{min}\leq m_0\leq x_{max}\\
       \min\{f(x_{min}), f(x_{max})\}, & \text{otherwise},
\end{array} 
\right.
\end{align}
which can be rewritten as
\begin{align} \label{fmin}
    \! f_{min}\!=\! \left\{
    \begin{array}{ll} 
       \frac{K_R(m^*(r,\mu_R)-\mu_T K_T)}{K_Tr}+\\ 
       \quad \quad \quad \quad \frac{K_R(1-\mu_R)}{m^*(\mu_R,r)n_T+\mu_R K_R}, &:\mu K_T<m^*(r,\mu_R)  \\
     \frac{K_R(1-\mu_R)}{\mu_T K_Tn_T+\mu_R K_R}, &:\mu K_T\geq m^*(\mu_R,r)
 \end{array} 
\right.
\end{align}
where we have defined
\begin{equation} \label{optimalm}
    m^*(r,\mu_R)=
    \begin{cases}
     \max\{m_0,1\} &: r<r_{th}, \\
     m_{max} &: r \geq r_{th}.
    \end{cases}
\end{equation}

As a result, if we can prove the inequality $f_{min}\geq 2 \delta_{up}(r,\mu_T,\mu_R)/3$, then Lemma \ref{lem:gap} holds immediately. Since $f_{min}$ is quite intractable, we turn to choose a simpler function, denoted as $\delta'_{lb}(r,\mu_T,\mu_R)$ that satisfies $\delta'_{lb}(r,\mu_T,\mu_R)\leq \delta_{lb}(r,\mu_T,\mu_R)$, where we have defined $\delta_{lb}(r,\mu_T,\mu_R)=f_{min}$.

The lower bound $\delta'_{lb}(r,\mu_T,\mu_R)$ is given as
\begin{align}  \label{eq:delpub}
&\delta'_{lb}(r,\mu_T,\mu_R)= \notag \\
& K_R(1-\mu_R)\Big(\frac{(i+2-\mu_T K_T) }{(i+1)n_T+\mu_R K_R}+  \frac{(\mu_T K_T-i-1)}{(i+2)n_T+\mu_R K_R}\Big)
\end{align}
for $\mu_T K_T\in [i,i+1)$, with $m(r,\mu_R) \leq i\leq m_{max}-1$; and
\begin{align} \label{twocases}
&\delta'_{lb}(r,\mu_T,\mu_R)= \notag \\
&\begin{cases}
\mbox{\normalsize\(	\frac{K_R(m^*(r,\mu_R)-\mu_T K_T)}{K_T r}+\frac{K_R(1-\mu_R)}{m^*(r,\mu_R) n_T+\mu_R K_R}\)},  \\
\mbox{\normalsize\( \frac{K_R(m(r,\mu_R)-\mu_T K_T)}{K_T r} +\delta'_{lb}(r,\frac{m(r,\mu_R)}{K_T},\mu_R) \)},   
\end{cases}
\end{align}
for $\mu_T K_T \leq m(r,\mu_R)$, and the first expression is for the regime $m^*(r)\in[ (m(r,\mu_R)-0.5),m(r,\mu_R)]$, while the second expression is for the regime $m^*(r)\in[m(r,\mu_R),(m(r,\mu_R)+0.5)]$, where $m^*(r,\mu_R)$ is given in \eqref{optimalm}.

To proceed, we now prove the inequality $\delta_{lb}(r,\mu_T,\mu_R)\geq \delta'_{lb}(r,\mu_T,\mu_R)$.

\begin{proof}
Since $m(r,\mu_R)$ is the nearest integer point of $m_0$ when $r<r_{th}$, we have the inequality $m(r,\mu_R)+1> m_0$, yielding $m(r,\mu_R)+1> m^*(r,\mu_R)$ for this range of $r$. Furthermore, we also have $m(r,\mu_R)=m^*(r,\mu_R)$ when $r\geq r_{th}$. Hence, the inequality $m(r,\mu_R)+1\geq m^*(r,\mu_R)$ holds for any value of $r$. As a result, for any sub-interval $\mu K_T \in [i, i+1)$, with $m(r,\mu_R)+1 \leq i\leq m_{max}-1$, $\delta_{lb}(r,\mu_T,\mu_R)$, i.e., $f_{min}$, is given as 
$K_R(1-\mu_R)/(\mu_T K_Tn_T+\mu_R K_R)$ from \eqref{fmin} since $\mu_T K_T \geq m^*(r,\mu_R)$. By simple comparison between the above $\delta_{lb}(r,\mu_T,\mu_R)$ and $\delta'_{lb}(r,\mu_T,\mu_R)$ in \eqref{eq:delpub}, we have the inequality $\delta_{lb}(r,\mu_T,\mu_R)\geq \delta'_{lb}(r,\mu_T,\mu_R)$.

For the remaining interval $\mu K_T\leq m(r,\mu_R)+1$, we distinguish the following two cases. 

\emph{Case 1: $m^*(r,\mu_R)\in [(m(r,\mu_R)-0.5), m(r,\mu_R) ]$.} Hence, we have the inequality $m^*(r,\mu_R)\leq m(r,\mu_R)$. For interval $\mu_T K_T\in[m(r,\mu_R), m(r,\mu_R)+1]$, the inequality $\delta_{lb}(r,\mu_T,\mu_R)\geq \delta'_{lb}(r,\mu_T,\mu_R)$ holds with the same reason as above. Moveover, by comparison, we have that $\delta'_{lb}(r,\mu_T,\mu_R)$ in \eqref{twocases} is equal to $\delta_{lb}(r,\mu_T,\mu_R)$ in \eqref{fmin} for $\mu K_T\leq m^*(r,\mu_R)$. Instead, for $\mu K_T \in[m^*(r,\mu_R), m(r,\mu_R)]$, since both $\delta'_{lb}(r,\mu_T,\mu_R)$ in \eqref{twocases} and $\delta_{lb}(r,\mu_T,\mu_R)=K_R(1-\mu_R)/(\mu_T K_Tn_T+\mu_R K_R)$ are decreasing functions of $\mu_T$, they are equal for $\mu_T K_T=m^*(r,\mu_R)$, and the former has a smaller gradient for the whole range of value of $\mu_T$ at hand, we have $\delta'_{lb}(r,\mu_T,\mu_R)\leq \delta_{lb}(r,\mu_T,\mu_R)$.

\emph{Case 2: $m^*(r,\mu_R)\in [m(r,\mu_R),(m(r,\mu_R)+0.5)]$.} In this range, the inequality $m^*(r,\mu_R)\geq m(r,\mu_R)$ holds immediately. For interval $\mu_T K_T\in[m(r,\mu_R), m(r,\mu_R)+1]$, since both $\delta'_{lb}(r,\mu_T,\mu_R)$ in \eqref{twocases} and $\delta_{lb}(r,\mu_T,\mu_R)$ are decreasing functions of $\mu_T$, they are equal for $\mu_T K_T=m(r,\mu_R)+1$, and the former has a smaller gradient, we have $\delta'_{lb}(r,\mu_T,\mu_R)\leq \delta_{lb}(r,\mu_T,\mu_R)$. This holds also for the value $\mu_T K_T=m(r,\mu_R)$. Combining this with the fact that $\delta'_{lb}(r, \mu_T,\mu_R)$ in \eqref{twocases} and $\delta_{lb}(r, \mu_T,\mu_R)$ in \eqref{fmin} are linear and parallel for $\mu_T K_T\leq m(r,\mu_R)$, we have $\delta'_{lb}(r, m(r,\mu_R)/K_T,\mu_R) \leq \delta_{lb}(r, \mu_T,\mu_R)$ in this range.
\end{proof}

To complete the proof, we proceed to prove the multiplicative gap between the upper bound $\delta_{up}(r,\mu_T,\mu_R)$ \eqref{Eq_Optimum_NDT} and the lower bound $\delta'_{lb}(r,\mu_T,\mu_R)$. For $\mu_T K_T\in [i,i+1)$, with $m(r,\mu_R) \leq i\leq m_{max}-1$, we have 
\begin{align} \label{final00}
&\frac{\delta_{up}(r,\mu_T,\mu_R)}{\delta'_{lb}(r,\mu_T,\mu_R)} \stackrel{(a)}{\leq} \frac{\delta_{up}(r,\mu_T=i/K_T,\mu_R)}{\delta'_{lb}(r,\mu_T=i/K_T,\mu_R)} \notag \\
&=1+\frac{2n_T}{(in_T+K_R\mu_R)((i+3)n_T+K_R\mu_R)}\leq \frac{3}{2},
\end{align}
where inequality (a) holds because $\delta_{up}(r,\mu_T,\mu_R)$ and $\delta'_{lb}(r,\mu_T,\mu_R)$ are both linearly decreasing and they coincide at the endpoint $\mu_T K_T=i+1$. For $\mu_T K_T\leq m(r,\mu_R)$ in Case 1, the gap is given as 
\begin{subequations} 
\begin{align}
&\frac{\delta_{up}(r,\mu_T,\mu_R)}{\delta'_{lb}(r,\mu_T,\mu_R)}\stackrel{(a)}{\leq} \frac{\delta_{up}(r,\mu_T=m(r,\mu_R)/K_T,\mu_R)}{\delta'_{lb}(r,\mu_T=m(r,\mu_R)/K_T,\mu_R)}  \notag \\
  &=\frac{\frac{1}{m(r,\mu_R)n_T+\mu_R K_R}}{\frac{m^*(r,\mu_R)-m(r,\mu_R)}{K_T r}+\frac{1}{m^*(r,\mu_R)n_T+\mu_R K_R}} \notag  \\
  &\stackrel{(b)}{=}\frac{1/p(m(r,\mu_R)n_T)}{2/p(m^*(r,\mu_R)n_T)-p(m(r,\mu_R)n_T)/p(m^*(r,\mu_R)n_T)} \notag  \\
& \stackrel{(c)}{\leq} 1+\frac{1/4}{(m(r,\mu_R)n_T+\mu_R K_R)^2-(m(r,\mu_R)n_T+\mu_R K_R)} \notag \\
&\stackrel{(d)}{\leq} \frac{3}{2}, \notag 
\end{align}
\end{subequations}
where inequality (a) holds because $\delta_{up}(r,\mu_T,\mu_R)$ and $\delta'_{lb}(r,\mu_T,\mu_R)$ decrease with the same slope and the maximum ratio is at the endpoint $\mu_T K_T=m(r,\mu_R)$; equality (b) holds due to the definition of $p(x)=x+\mu_R K_R$ and $m^*(r,\mu_R)$; inequality (c) holds due to the constraints $m^*(r,\mu_R) \in [( m(r)-0.5), m(r)]$; and inequality (d) holds for any $m(r)\geq 2$, while for $m(r)=1$, we have $K_T r/n_T \in[0,1]$ and $\mu\in[0,1]$. With simple comparison, we can get $\delta_{up}(r,\mu_T,\mu_R)=\delta'_{lb}(r,\mu_T,\mu_R)$. Finally, for $\mu K_T\leq m(r)$ in Case 2, the gap is given as 
\begin{align}
&\frac{\delta_{up}(r,\mu_T,\mu_R)}{\delta'_{lb}(r,\mu_T,\mu_R)}\stackrel{(a)}{\leq} \frac{\delta_{up}(r,\mu_T=m(r,\mu_R)/K_T,\mu_R)}{\delta'_{lb}(r,\mu_T=m(r,\mu_R)/K_T,\mu_R)} \notag \\
&=1+\frac{2n_T}{(m(r,\mu_R)n_T+K_R\mu_R)((m(r,\mu_R)+3)n_T+K_R\mu_R)}
\notag \\
&\leq \frac{3}{2}, 
\end{align}
where inequality (a) holds as inequality (a) in \eqref{final00}. This completes the proof.

\end{appendices}

\end{document}